\theoremstyle{plain}
\newtheorem{thm}{Theorem}[section]
\newtheorem{proposition}{Proposition}[section]
\newtheorem*{main theorem}{Theorem}
\theoremstyle{definition}
\begin{document}

\title{An imbedding of spacetimes}
\author{Do-Hyung Kim}
\address{Department of Mathematics, College of Natural Science, Dankook University,
San 29, Anseo-dong, Dongnam-gu, Cheonan-si, Chungnam, 330-714,
Republic of Korea} \email{mathph@dankook.ac.kr}

\keywords{causality, imbedding, spacetimes, Cauchy surfaces,
global hyperbolicity, conformal imbedding}

\begin{abstract}
It is shown that any two-dimensional spacetimes with compact
Cauchy surfaces can be causally isomorphically imbedded into the
two-dimensional Einstein's static universe. Also, it is shown that
any two-dimensional globally hyperbolic spacetimes are conformally
equivalent to a subset of the two-dimensional Einstein's static
universe.

\end{abstract}

\maketitle

\section{Introduction} \label{section:1}

The group of symmetries plays central roles in theoretical physics
as well as many branches of mathematics. In this sense, to find or
to analyze the structure of symmetry group is important. In
\cite{CQG2}, it is shown that any two-dimensional spacetimes with
non-compact Cauchy surfaces can be causally isomorphically
imbedded into $\mathbb{R}^2_1$, and by use of this, in \cite{JGP},
it is shown that the groups of causal automorphisms on
two-dimensional spacetimes with non-compact Cauchy surfaces are
subgroups of that of $\mathbb{R}^2_1$. In other words, to imbed a
spacetime into a larger space with certain structure preserved is
important in analysis of structure groups.

In this paper, we show that any two-dimensional spacetimes with
compact Cauchy surfaces can be causally isomorphically imbedded
into two-dimensional Einstein's static universe. In conclusion,
any two-dimensional globally hyperbolic spacetimes can be causally
isomorphically imbedded into two-dimensional Einstein's static
universe. We also show that the role of causally isomorphic
imbedding can be replaced by conformal diffeomorphism.

\section{Preliminaries} \label{section:2}

In this section, we briefly review and improve the results in
\cite{CQG2} and \cite{CQG1}.

Let $M$ be a spacetime with a non-compact Cauchy surface $\Sigma$.
For $p \in J^+(\Sigma)$ and $q \in J^-(\Sigma)$, let $S_p^+ =
J^-(p) \cap \Sigma$ and $S_q^- = J^+(q)\cap \Sigma$. Then, the
compact and connected subsets $S_p^+$ and $S_q^-$ uniquely
determine the points $p$ and $q$, and thus $\textsl{C}_M^+ = \{
S_p^+ \,\, | \,\, p \in J^+(\Sigma) \}$ and $\textsl{C}_M^- = \{
S_q^- \,\, | \,\, q \in J^-(\Sigma) \}$ represent $M$.

If $S_p^+, S_q^+ \in \textsl{C}_M^+$ satisfy $S_p^+ \subset
S_q^+$, then we can show that $p \leq q$ and similar result also
holds for subsets in $\textsl{C}_M^-$. Also, it is obvious that
$S_p^+ \cap S_q^- \neq \emptyset$ if and only if $q \leq p$. In
this way, we can encode causal structures of $M$ into its Cauchy
surface $\Sigma$.

Let $N$ be a spacetime with a non-compact Cauchy surface
$\Sigma^\prime$ and $f : \Sigma \rightarrow \Sigma^\prime$ be a
homeomorphism. If $f$ induces a bijection from $\textsl{C}_M^+$
onto $\textsl{C}_N^+$, and from $\textsl{C}_M^-$ onto
$\textsl{C}_N^-$, then $f$ can be extended to a unique causal
isomorphism $\overline{f} : M \rightarrow N$. This is Theorem 5.4
in \cite{CQG1}.

If a homeomorphism $f : \Sigma \rightarrow \Sigma^\prime$ induces
a map, not necessarily bijective, from $\textsl{C}_M^+$ into
$\textsl{C}_N^+$, and from $\textsl{C}_M^-$ into $\textsl{C}_N^-$,
then $f$ can be uniquely extended to a causally isomorphic
imbedding $\overline{f} : M \hookrightarrow N$. In other words,
$\overline{f}$ is an imbedding and satisfies $x \leq y$ if and
only if $\overline{f}(x) \leq \overline{f}(y)$ for all $x$ and $y$
in $M$.

Let $\mathbb{R}^2_1 = \{ (t,x) \,\, | \,\, t,x \in \mathbb{R} \}$
be a two-dimensional Minkowski spacetime. Then $\mathbb{R}_{t_0} =
\{ (t_0, x) \,\, | \,\, x \in \mathbb{R} \}$ is a Cauchy surface
of $\mathbb{R}^2_1$. One of the characteristic properties of
$\mathbb{R}^2_1$ is that for any compact and connected subset $A$
of $\mathbb{R}_{t_0}$, there exist unique $p$ and $q$ such that
$S_p^+ = S_q^- = A$. Therefore, for any two-dimensional spacetime
$M$ with a non-compact Cauchy surface $\Sigma$, any homeomorphism
$f : \Sigma \rightarrow \mathbb{R}_{t_0}$ can be uniquely extended
to a causally isomorphic imbedding $\overline{f} : M
\hookrightarrow \mathbb{R}^2_1$ and thus we have the following.

\begin{thm} \label{imbedding}
For any given homeomorphism $f : \Sigma \rightarrow
\mathbb{R}_{t_0}$, we can extend $f$ to a causally isomorphic
imbeeding $\overline{f} : M \hookrightarrow \mathbb{R}^2_1$. In
other words, any two-dimensional spacetime with non-compact Cauchy
surfaces can be causally isomorphically imbedded in
$\mathbb{R}^2_1$.
\end{thm}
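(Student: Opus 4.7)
The plan is to invoke the extension criterion recalled just above the theorem, which states that a homeomorphism $f:\Sigma\to\Sigma'$ between Cauchy surfaces extends uniquely to a causally isomorphic imbedding $\overline{f}:M\hookrightarrow N$ provided $f$ carries $\textsl{C}_M^+$ into $\textsl{C}_N^+$ and $\textsl{C}_M^-$ into $\textsl{C}_N^-$. Taking $N=\mathbb{R}^2_1$ and $\Sigma'=\mathbb{R}_{t_0}$, the only thing that needs to be verified is that this inducing property is automatic for any homeomorphism $f:\Sigma\to\mathbb{R}_{t_0}$.

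To check it, I would pick an arbitrary $S_p^+=J^-(p)\cap\Sigma\in\textsl{C}_M^+$. Since $S_p^+$ is compact and connected and $f$ is a homeomorphism, $f(S_p^+)\subset\mathbb{R}_{t_0}$ is also compact and connected. Here the special role of $\mathbb{R}^2_1$ kicks in: by the property recalled above, every compact connected subset of $\mathbb{R}_{t_0}$ is of the form $S_{p'}^+$ for a unique $p'\in J^+(\mathbb{R}_{t_0})$, so applying this to $A=f(S_p^+)$ produces a unique $p'$ with $S_{p'}^+=f(S_p^+)$. Thus $f(S_p^+)\in\textsl{C}_{\mathbb{R}^2_1}^+$, giving the desired induced map on $\textsl{C}^+$. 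An identical argument with the signs reversed — using that any such $A$ also equals $S_{q'}^-$ for a unique $q'\in J^-(\mathbb{R}_{t_0})$ — shows that $f$ sends $\textsl{C}_M^-$ into $\textsl{C}_{\mathbb{R}^2_1}^-$.

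With both hypotheses of the extension criterion verified, I directly obtain the unique causally isomorphic imbedding $\overline{f}:M\hookrightarrow\mathbb{R}^2_1$ extending $f$. I do not anticipate any genuine obstacle: all the real work is already encapsulated in Theorem 5.4 of \cite{CQG1}, and the defining feature of $\mathbb{R}^2_1$ — that every compact connected subset of a horizontal slice realizes both an $S^+$ and an $S^-$ — is precisely what prevents $f$ from ever failing the inducing condition. The only mild point to watch is the degenerate case where $S_p^+$ reduces to a single point (i.e.\ $p\in\Sigma$), but singletons are also compact and connected in $\mathbb{R}_{t_0}$, so the argument goes through without modification.
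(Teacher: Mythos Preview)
Your argument is correct and is precisely the reasoning the paper lays out in the paragraphs immediately preceding the theorem; the paper's own proof simply cites Theorem~5.1 of \cite{CQG2}, where this same argument is carried out. One small point: the extension criterion you actually invoke is the \emph{imbedding} version (``induces a map, not necessarily bijective'') stated just after Theorem~5.4 of \cite{CQG1}, not Theorem~5.4 itself, which concerns the bijective case.
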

\begin{proof}
This is Theorem 5.1 in \cite{CQG2}.
\end{proof}

It is well-known that, if $f : M \rightarrow N$ is a causal
isomorphism with the dimension of $M$ bigger than two, then $f$
becomes a conformal diffeomorphism. (\cite{HKM}, \cite{Malament},
\cite{Fullwood}.) However, this is not the case when the dimension
is two and so the causal isomorphism given in the above theorem is
not necessarily a conformal diffeomorphism and not even
necessarily smooth. Nevertheless, even when the dimension is two,
if we take a $C^\infty$-diffeomorphism $f : \Sigma \rightarrow
\mathbb{R}_{t_0}$, then the extended map $\overline{f}$ is a
conformal diffeomorphism.

\begin{proposition} \label{confo-prop}
If we take $f : \Sigma \rightarrow \mathbb{R}_{t_0}$ to be
$C^\infty$ diffeomorphism, then the induced imbedding
$\overline{f}$ is a $C^\infty$-conformal diffeomorphism onto its
image in $\mathbb{R}^2_1$.
\end{proposition}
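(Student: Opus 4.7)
The plan is to introduce global null coordinates on both $M$ and $\mathbb{R}^2_1$, write $\overline{f}$ in these coordinates, and compute the pullback metric directly.

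First I would use the structure of two-dimensional globally hyperbolic spacetimes to define null coordinates $(a,b)$ on $M$: for $p \in J^+(\Sigma)$ (respectively $p \in J^-(\Sigma)$), let $a(p)$ and $b(p)$ be the two endpoints in $\Sigma$ of the compact interval $S_p^+$ (respectively $S_p^-$), i.e., the points where the past-directed (respectively future-directed) null geodesics from $p$ meet $\Sigma$. After identifying $\Sigma$ with $\mathbb{R}$ via $f$, the map $p \mapsto (a(p), b(p))$ is smooth by smooth dependence of geodesics on initial data, and its Jacobian is nondegenerate because the two null foliations of a two-dimensional Lorentzian manifold are smooth and transverse. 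On $\mathbb{R}^2_1$ I would introduce the completely analogous endpoint coordinates $(\alpha, \beta)$. In these charts the Minkowski metric takes the form $g_{\mathbb{R}^2_1} = d\alpha\,d\beta$, while $g_M = \Omega^2(a,b)\,da\,db$ for some positive smooth function $\Omega^2$, since the coordinate vector fields are null in each case.

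Next, I would identify $\overline{f}$ explicitly in these charts. By the construction recalled in the preceding paragraphs, $\overline{f}(p)$ is the unique point of $\mathbb{R}^2_1$ with $S_{\overline{f}(p)}^+ = f(S_p^+)$, so its endpoint coordinates are $(f(a(p)), f(b(p)))$. Hence $\overline{f}$ is given in these coordinates by $(a,b) \mapsto (f(a), f(b))$, which is manifestly $C^\infty$ and a local diffeomorphism onto its image since $f$ is a $C^\infty$ diffeomorphism of $\mathbb{R}$. Combined with the injectivity already ensured by the causal isomorphism property, this shows that $\overline{f}$ is a smooth imbedding. The conformal check is then a direct computation: $\overline{f}^* g_{\mathbb{R}^2_1} = d(f(a))\,d(f(b)) = f'(a) f'(b)\,da\,db = \frac{f'(a) f'(b)}{\Omega^2(a,b)}\,g_M$. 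Since $f$ is a diffeomorphism of $\mathbb{R}$, $f'$ is nowhere zero and of constant sign, so $f'(a) f'(b)/\Omega^2$ is positive and smooth throughout $M$, which is exactly the conformal equivalence claimed.

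The main obstacle I anticipate is the geometric setup in the first step: proving that the endpoint map $p \mapsto (a(p), b(p))$ genuinely defines a global smooth chart on $M$ and that $\overline{f}$ takes the product form $(a,b) \mapsto (f(a), f(b))$ in these charts. This relies on the fact that the two families of null geodesics foliate $M$ smoothly and transversally, together with careful tracing through the construction of $\overline{f}$. Once this is in place, both the smoothness of $\overline{f}$ and the conformal calculation are immediate.
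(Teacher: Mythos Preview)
Your argument is correct, but it follows a different line from the paper's own proof. For smoothness, both you and the paper use essentially the same idea: the endpoints of $S_p^{\pm}$ are obtained by following null geodesics from $p$ to $\Sigma$, so their dependence on $p$ is smooth, and the paper then runs the same observation backwards in $\mathbb{R}^2_1$. Your packaging of this as a global null coordinate chart $(a,b)$ is a neat reformulation of the same mechanism.

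The genuine difference is in the conformality step. The paper never introduces coordinates or computes a pullback metric; instead it argues purely causally. Given a null vector $v = \gamma'(0)$, the paper uses the absence of null cut points (a consequence of the non-compact Cauchy surface) to conclude that $\gamma(0) \leq \gamma(t)$ while $\gamma(0) \not\ll \gamma(t)$ for all $t$. Since $\overline{f}$ is a causal isomorphism, the same relations hold for the images, which forces $\overline{f}\circ\gamma$ to be a null pregeodesic and hence $\overline{f}_*(v)$ to be null. This is the standard ``null goes to null implies conformal'' criterion. Your route, by contrast, writes $\overline{f}$ explicitly as $(a,b)\mapsto(f(a),f(b))$ in null coordinates and reads off the conformal factor $f'(a)f'(b)/\Omega^2$ directly. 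Your approach is more computational and yields the conformal factor explicitly, at the cost of being intrinsically two-dimensional (global null coordinates are a 2D luxury); the paper's approach is coordinate-free and closer in spirit to the causal-isomorphism framework, but gives no formula. One minor point to tidy up: the Minkowski metric in your endpoint coordinates is $-d\alpha\,d\beta$ rather than $d\alpha\,d\beta$, though this sign is of course irrelevant for conformality.
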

\begin{proof}
For $p \in J^+(\Sigma)$, there exist unique $x$ and $y$ in
$\Sigma$ such that $\partial S_p^+ = \{ x, y\}$ and, since $x$ and
$y$ are determined by unique null geodesics from $p$, the
dependence of $x$ and $y$ on $p$ is smooth. Since $f$ is
$C^\infty$, the dependence of $f(x)$ and $f(y)$ on $p$ is also
smooth, and by the same argument using null geodesics, the
dependence of $\overline{f}(p)$ on $f(x)$ and $f(y)$ is smooth.
Therefore, $\overline{f}$ is smooth and likewise,
$\overline{f}^{-1}$ is also smooth.

To show that $\overline{f}$ is conformal, it suffices to show that
$\overline{f}_*(v)$ is null whenever $v \in T_pM$ is a null
vector. Let $\gamma$ be a null geodesic such that $\gamma(0)=p$
and $\gamma^\prime(0)=v$. Then, since $\Sigma$ is non-compact,
there exist no null cut points along $\gamma$ and so we have
$\gamma(0) \leq \gamma(t)$ but not $\gamma(0) << \gamma(t)$ for
all $t$. Since $\overline{f}$ is a causal isomorphism, we have
$\overline{f}(\gamma(0)) \leq \overline{f}(\gamma(t))$ but not
$\overline{f}(\gamma(0)) << \overline{f}(\gamma(t))$. Therefore,
any causal curve from $\overline{f}(\gamma(0))$ to
$\overline{f}(\gamma(t))$ is a null pregeodesic and thus
$\overline{f}_*(v)$ is null. The same argument applied to
$\overline{f}^{-1}$ shows that $\overline{f}^{-1}$ is also
conformal.
\end{proof}

Therefore, we have the following.

\begin{thm} \label{conformal}
Let $M$ be a two-dimensional spacetime with non-compact Cauchy
surfaces. Then $M$ is conformally equivalent to a globally
hyperbolic open subset of $\mathbb{R}^2_1$ that contains $x$-axis
as a Cauchy surface.
\end{thm}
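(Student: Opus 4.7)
The plan is to apply Proposition \ref{confo-prop} with the choice $t_0 = 0$ so that the target Cauchy surface $\mathbb{R}_{t_0}$ is exactly the $x$-axis. Since $\Sigma$ is a non-compact, connected, boundaryless smooth $1$-manifold, it is diffeomorphic to $\mathbb{R}$, so a $C^\infty$ diffeomorphism $f : \Sigma \rightarrow \mathbb{R}_0$ exists. Feeding this $f$ into Proposition \ref{confo-prop} produces a smooth conformal diffeomorphism $\overline{f} : M \rightarrow \mathbb{R}^2_1$ onto its image $U := \overline{f}(M)$.

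It then remains to verify three properties of $U$: it is open in $\mathbb{R}^2_1$, it is globally hyperbolic, and $\mathbb{R}_0 \subset U$ is a Cauchy surface of $U$. Openness is immediate from invariance of domain, since $\overline{f}$ is a continuous injection between manifolds of equal dimension. Global hyperbolicity and the Cauchy-surface property transfer from $(M,\Sigma)$ to $(U,\mathbb{R}_0)$ because a conformal diffeomorphism preserves the chronological and causal relations, and both of these notions are characterized purely in terms of those relations (a Cauchy surface is an achronal set met exactly once by every inextendible causal curve; global hyperbolicity is equivalent to the existence of such a surface).

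The only conceptually non-routine step is the observation that Cauchy-surface-hood and global hyperbolicity are conformal invariants of the ambient manifold rather than invariants of the metric itself; once this is spelled out the proof is an assembly of earlier results. I do not anticipate any genuine obstacle: the bulk of the work has already been carried out in Proposition \ref{confo-prop}, and the remaining verifications are standard consequences of invariance of domain and of the causal characterization of Cauchy surfaces.
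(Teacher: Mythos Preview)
Your proposal is correct and follows exactly the paper's approach: the paper's entire proof is the single sentence ``Take $t_0 = 0$ in the previous proposition.'' You have simply spelled out the details (existence of a smooth $f$, openness via invariance of domain, conformal invariance of global hyperbolicity and of the Cauchy property) that the paper leaves implicit.
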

\begin{proof}
Take $t_0 = 0$ in the previous proposition.
\end{proof}

In fact, if sufficient smoothness is assumed, the study of causal
structure and the study of conformal structure are essentially the
same as the following theorem shows.

\begin{thm}
Let $f : M \rightarrow N$ be a conformal diffeomorphism. Then $f$
is either a causal isomorphism or an anti-causal isomorphism.
\end{thm}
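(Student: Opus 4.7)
The plan is to use the fact that a conformal diffeomorphism preserves the null cone field and then show that time orientation is either preserved or reversed in a globally consistent way.

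First, I would record the elementary linear-algebraic fact: if $f$ is a conformal diffeomorphism, then at each point $p \in M$ the differential $f_*:T_pM \to T_{f(p)}N$ preserves the set of null vectors, and consequently also the set of timelike vectors and the set of causal vectors (since these are determined by the null cone). In a two-part (future/past) time-oriented Lorentzian tangent space, a linear map preserving the full causal cone must either send the future cone to the future cone, or send the future cone to the past cone. So at each $p$ we get a binary invariant $\epsilon(p) \in \{+1,-1\}$ recording whether $f_*$ preserves or reverses the time orientation.

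Next I would show that $p \mapsto \epsilon(p)$ is locally constant, hence constant on the connected spacetime $M$. The natural way is to pick a continuous future-directed timelike vector field $X$ on $M$ (which exists because $M$ is time oriented); then $f_*X$ is a continuous nowhere-zero timelike vector field on $f(M) \subset N$, and whether it is future or past directed with respect to $N$'s time orientation is a continuous, and hence locally constant, function of $p$. So there are exactly two cases: $\epsilon \equiv +1$ on all of $M$, or $\epsilon \equiv -1$ on all of $M$.

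Now I would transfer this infinitesimal statement to the level of the causal relation. In the case $\epsilon \equiv +1$, every future-directed causal curve in $M$ is carried by $f$ to a future-directed causal curve in $N$, because $f_*$ sends future-directed causal tangent vectors to future-directed causal tangent vectors at each point. This yields $x \leq y \Rightarrow f(x) \leq f(y)$, and applying the same argument to the conformal diffeomorphism $f^{-1}$ (which satisfies the same dichotomy and must have the same sign by consistency with $f$) gives the reverse implication, so $f$ is a causal isomorphism. In the case $\epsilon \equiv -1$, the identical argument with ``future'' replaced by ``past'' shows $x \leq y \Leftrightarrow f(y) \leq f(x)$, so $f$ is an anti-causal isomorphism.

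The only real subtlety is verifying that the dichotomy is genuinely global rather than merely pointwise: once one reduces it to the continuity of $p \mapsto \epsilon(p)$ via a globally defined future-directed timelike vector field, the rest is just unpacking definitions. I expect the hardest step conceptually, though not technically, to be this global rigidity statement, since it is the place where the connectedness and time-orientability of $M$ genuinely enter.
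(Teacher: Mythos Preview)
Your proposal is correct and follows essentially the same line as the paper's proof: both fix a global future-directed timelike vector field $X$ on $M$, argue that $f_*X$ is timelike and, by continuity on the connected manifold $M$, is either everywhere future-directed or everywhere past-directed, and then deduce that all future-directed causal vectors are treated consistently. The only cosmetic difference is that the paper makes the continuity step explicit by pairing $f_*X$ with a fixed time-orientation field $Y$ on $N$ and observing that the scalar $g_N(f_*X,Y)$ is nowhere zero, whereas you phrase the same step as local constancy of the sign function $\epsilon(p)$.
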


\begin{proof}
Let $M$ and $N$ be spacetimes in which time-orientations are given
by nowhere-vanishing timelike vector fields $X$ and $Y$,
respectively. Let $Z = (f^{-1})_*(Y)$. Then, since $g_M(X,Z) =
g_N(f_*(X),Y)$, we can see that, if $g_M(X,Z) = 0$, then $F_*(X)$
is a spacelike vector, which contradicts to that $f$ is conformal.
Therefore, $g_N(f_*(X), Y)$ is nowhere zero and so, since $f_*(X)$
is timelike, $f_*(X)$ is either everywhere future-directed or
everywhere past-directed.

Let $v \in T_pM$ be a future-directed timelike vector. Then, since
$g_N(f_*(v), f_*(X)) = g_M(v, X_p) < 0$, $f_*(v)$ determines the
same time-orientation as that determined by $f_*(X)$. Therefore,
$f_*(v)$ is future-directed if $f_*(X)$ and $Y$ determine the same
time-orientation, and otherwise, $f_*(v)$ is past-directed.
\end{proof}

\section{Spacetimes with compact Cauchy surfaces} \label{section:3}

Let $(M,g)$ be a two-dimensional spacetime with compact Cauchy
surfaces. Then by Theorem 1 in \cite{Bernal}, $M$ is diffeomorphic
to $\mathbb{R} \times S^1$ and we can use $(t, e^{ix})$as a
coordinate on $M$.

Since $\exp : \mathbb{R} \rightarrow S^1$ defined by $\exp(x) =
e^{ix}$ is a covering map, the map $\pi : \mathbb{R} \times
\mathbb{R} \rightarrow M = \mathbb{R} \times S^1$ given by
$\pi(t,x) = ( t, e^{ix})$ is a covering map. Let $\overline{M}$ be
$\mathbb{R} \times \mathbb{R}$ with the pull-back metric $\pi^*g$.
Then, $\overline{M}$ is a universal covering space of $M$ in such
a way that $\pi$ is a time-orientation preserving covering map.
Then by Theorem 2.1 in \cite{JMP} or the proof of Theorem 14 in
\cite{Chernov}, $\overline{M}$ is globally hyperbolic with the
non-compact Cauchy surface $\Sigma = \{ (0,x) \,\, | \,\, x \in
\mathbb{R} \}$.

If we choose homeomorphism $f : \Sigma \rightarrow \mathbb{R}_0$
to be $f(x) = x$, where $\mathbb{R}_0 = \{ (0,x) \,\, | \,\, x \in
\mathbb{R} \}$ is a Cauchy surface of $\mathbb{R}^2_1$, then $f$
induces a causally isomorphic imbedding $\overline{f} : M
\hookrightarrow \mathbb{R}^2_1$.

The two-dimensional Einstein's static universe is $E = \mathbb{R}
\times S^1$ with the flat metric $-dt^2 + d\theta^2$. Thus the
universal covering space of $E$ is $\mathbb{R}^2_1$ with the
covering map $\pi_E(t, \theta) = (t, e^{i\theta})$. Therefore, we
have the following diagram.

\
\
\

If we identify $\overline{M} = \mathbb{R} \times \mathbb{R}$ with
$\overline{f}(\overline{M})$, then, since $f(x) = x$, we have $\pi
= \pi_E$ on $\overline{f}(\overline{M})$, and thus $M$ is causally
isomorphic to $\pi_E \circ \overline{f}(\overline{M})$, which is
an open subset of $E$. Therefore, we have the following.

\begin{thm} \label{cpt-imbedding}
Any two-dimensional spacetimes with compact Cauchy surfaces can be
causally isomorphically imbedded into the two-dimensional
Einstein's static universe.
\end{thm}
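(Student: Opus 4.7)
The plan is to reduce to the non-compact Cauchy case already handled by Theorem \ref{imbedding}, by lifting to the universal cover and then descending through the covering $\pi_E : \mathbb{R}^2_1 \to E$. Concretely, I would take the universal cover $\overline{M} = \mathbb{R}\times\mathbb{R}$ with the pull-back metric $\pi^*g$; this is globally hyperbolic with non-compact Cauchy surface $\Sigma=\{(0,x) : x\in\mathbb{R}\}$ by the cited results. Choosing $f:\Sigma\to\mathbb{R}_0$ to be the identity on coordinates, Theorem \ref{imbedding} yields a causally isomorphic imbedding $\overline{f}:\overline{M}\hookrightarrow\mathbb{R}^2_1$ extending $f$. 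The candidate imbedding $M\hookrightarrow E$ is then the map $\Phi$ induced by $\pi_E\circ\overline{f}$ on $\pi$-orbits.

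The heart of the argument is the equivariance of $\overline{f}$ under the translation $\tau(t,x)=(t,x+2\pi)$ that generates both deck groups. Since $\tau$ acts by time-orientation preserving isometries on $\overline{M}$ and on $\mathbb{R}^2_1$, and since $f(x)=x$ intertwines these two actions on $\Sigma$, both $\tau\circ\overline{f}$ and $\overline{f}\circ\tau$ are causally isomorphic imbeddings $\overline{M}\hookrightarrow\mathbb{R}^2_1$ that agree with the same homeomorphism on $\Sigma$; by the uniqueness of extension recalled in Section \ref{section:2}, they must coincide. Equivariance immediately implies that $\pi_E\circ\overline{f}$ is constant on $\pi$-fibers, so $\Phi$ descends to a well-defined map on $M$, and injectivity of $\Phi$ then follows from injectivity of $\overline{f}$ by running this argument in reverse.

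Finally I would verify that $\Phi$ is a causal isomorphism onto its image. Order preservation is direct: a causal curve in $M$ lifts to a causal curve in $\overline{M}$, which $\overline{f}$ maps to a causal curve in $\mathbb{R}^2_1$, whose projection by $\pi_E$ is a causal curve in $E$ joining the correct endpoints. Order reflection is slightly trickier and is where I expect the main subtlety: given $\Phi(p)\leq\Phi(q)$ in $E$, I would lift a causal curve from $\Phi(p)$ to $\Phi(q)$ into $\mathbb{R}^2_1$ starting at $\overline{f}(\tilde p)$, and use equivariance to identify its endpoint as $\overline{f}(\tau^m \tilde q')$ for appropriate lifts of $q$; then because $\overline{f}$ is a causal isomorphism with respect to the ambient causal order of $\mathbb{R}^2_1$, this forces $\tilde p\leq\tau^m \tilde q'$ in $\overline{M}$, and projecting by $\pi$ gives $p\leq q$ in $M$. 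The only genuinely nontrivial ingredient is the uniqueness used in the equivariance step, and this is immediate from the reconstruction of $\overline{f}$ through the collections $\textsl{C}_{\overline{M}}^{\pm}$ recalled in Section \ref{section:2}.
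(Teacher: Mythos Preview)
Your proposal is correct and follows the same strategy as the paper: lift $M$ to its universal cover $\overline{M}$, causally imbed $\overline{M}$ into $\mathbb{R}^2_1$ via Theorem~\ref{imbedding} using the identity on $\Sigma$, and then descend through the covering $\pi_E$ to $E$. The equivariance argument you supply via the uniqueness of the extension is precisely what underlies the paper's terse assertion that, after identifying $\overline{M}$ with $\overline{f}(\overline{M})$, one has $\pi=\pi_E$ on that set; you have simply made explicit (including the order-reflection check for the descended map) what the paper leaves to the reader.
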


Since the homeomorphism $ f : \Sigma \rightarrow \mathbb{R}_0$
used in the above theorem is, in fact a diffeomorphism, by
Proposition \ref{confo-prop}, $\overline{f}$ is a smooth conformal
diffeomorphism and thus, since the covering maps $\pi$ and $\pi_E$
are smooth local isometries, we have the following.

\begin{thm} \label{cpt-conf}
Any two-dimensional spacetimes with compact Cauchy surfaces can be
conformally equivalently imbedded into the two-dimensional
Einstein's static universe.
\end{thm}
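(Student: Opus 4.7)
The plan is to promote the causal imbedding constructed in Theorem \ref{cpt-imbedding} to a conformal one, by exploiting the smoothness of the base homeomorphism and the fact that the two relevant covering maps are local isometries.

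First I would note that the base map $f : \Sigma \to \mathbb{R}_0$ given by $f(x) = x$ is manifestly a $C^\infty$-diffeomorphism, so Proposition \ref{confo-prop} applies directly: the induced causally isomorphic imbedding $\overline{f} : \overline{M} \hookrightarrow \mathbb{R}^2_1$ of the universal cover is a smooth conformal diffeomorphism onto its image. Next I would recall that $\pi : \overline{M} \to M$ is a local isometry by the very definition of the pullback metric $\pi^*g$, and that $\pi_E : \mathbb{R}^2_1 \to E$ is likewise a local isometry; in particular, both are smooth conformal local diffeomorphisms.

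The argument preceding Theorem \ref{cpt-imbedding} already identifies $M$ with the open subset $\pi_E(\overline{f}(\overline{M})) \subset E$ via a causal isomorphism $F$ characterized by $F \circ \pi = \pi_E \circ \overline{f}$. Since $F$ can be expressed locally as the composition of a local inverse of $\pi$, then $\overline{f}$, then $\pi_E$ — all of which are conformal — $F$ is automatically a conformal diffeomorphism onto its open image. The statement of the theorem then follows.

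The main technical point, which is implicit in the proof of Theorem \ref{cpt-imbedding} but deserves comment, is to check that $\overline{f}$ is equivariant with respect to the deck transformation group $\mathbb{Z}$ acting by $(t,x) \mapsto (t, x + 2\pi n)$, so that the descent $F$ is truly well-defined. Because these deck transformations are isometries of $\overline{M}$, and because $f$ is the identity map on the base Cauchy surface, the uniqueness clause in the extension procedure of Theorem \ref{imbedding} forces $\overline{f}$ to intertwine the $\mathbb{Z}$-action on $\overline{M}$ with the corresponding $\mathbb{Z}$-action on $\mathbb{R}^2_1$. In particular $\overline{f}(\overline{M})$ is invariant under integer $2\pi$-translations, and the descent of $\pi_E \circ \overline{f}$ to a map $F : M \hookrightarrow E$ is unambiguous. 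Once this equivariance is secured, the conformal imbedding property is a direct consequence of the local isometry property of the two covering maps.
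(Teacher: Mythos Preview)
Your proposal is correct and follows essentially the same route as the paper: the paper's one-sentence argument is precisely that $f(x)=x$ is a diffeomorphism, so Proposition~\ref{confo-prop} makes $\overline{f}$ a smooth conformal diffeomorphism, and then the covering maps $\pi$ and $\pi_E$ are smooth local isometries, whence the descended map is conformal. Your additional paragraph on $\mathbb{Z}$-equivariance makes explicit what the paper glosses over with the phrase ``since $f(x)=x$, we have $\pi=\pi_E$ on $\overline{f}(\overline{M})$,'' and your justification via the uniqueness clause of Theorem~\ref{imbedding} is the right way to pin this down.
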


It is a well-known fact that two-dimensional Minkowski spacetime
can be causally isomorphically, or conformally equivalently
imbedded into the two-dimensional Einstein's static universe in
such a way that $\mathbb{R}_0$ is sent to $S^1-\{\mbox{a point}\}$
(See sectin 5.1 in \cite{HE}). Then, by combining this with
Theorem \ref{imbedding} or Theorem \ref{conformal}, we have the
following.

\begin{thm}
Any two-dimensional spacetimes with non-compact Cauchy surfaces
can be causally isomorphically, or conformally equivalently
imbedded into the two-dimensional Eisntein's static universe.
\end{thm}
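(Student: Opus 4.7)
The plan is to simply compose two imbeddings that are already available. First I would invoke Theorem \ref{imbedding} (for the causal isomorphic statement) or Theorem \ref{conformal} (for the conformal statement) to obtain an imbedding $\overline{f} : M \hookrightarrow \mathbb{R}^2_1$ of the given two-dimensional spacetime with non-compact Cauchy surface $\Sigma$ into the two-dimensional Minkowski spacetime, sending $\Sigma$ onto the line $\mathbb{R}_0 = \{(0,x) : x \in \mathbb{R}\}$. Then I would apply the well-known imbedding $\iota : \mathbb{R}^2_1 \hookrightarrow E$ quoted from \cite{HE}, which is either causally isomorphic or conformal onto its image, mapping $\mathbb{R}_0$ to $S^1 \setminus \{\text{pt}\}$.

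The second step is to verify that the composition $\iota \circ \overline{f} : M \hookrightarrow E$ is again a causally isomorphic (resp. conformal) imbedding onto its image. For the causal version this is essentially formal: a causal isomorphism onto its image preserves and reflects the relation $\leq$, and this property is stable under composition with any causally isomorphic imbedding. For the conformal version, composition of conformal diffeomorphisms onto their images is again a conformal diffeomorphism onto its image, so no additional work is needed.

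I do not anticipate any real obstacle, since the bulk of the content has already been absorbed into Theorems \ref{imbedding} and \ref{conformal} and into the standard fact in \cite{HE}. The only point that deserves a brief comment is that the image of the composition is an open subset of $E$ whose causal/conformal structure as a submanifold agrees with the one inherited from $M$, which follows immediately from the fact that each of the two imbeddings has this property separately. In summary, the proof is a one-line composition, and the statement of the theorem should be essentially an observation rather than an independent argument.
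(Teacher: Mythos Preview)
Your proposal is correct and matches the paper's own argument essentially verbatim: the paper simply invokes the standard conformal/causal imbedding of $\mathbb{R}^2_1$ into $E$ from \cite{HE} and composes it with Theorem~\ref{imbedding} or Theorem~\ref{conformal}. Your additional remarks about why the composition remains a causal isomorphism (resp.\ conformal diffeomorphism) onto its image are fine and slightly more explicit than what the paper writes.
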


If we now combine this theorem with Theorem \ref{cpt-imbedding}
and Theorem \ref{cpt-conf}, we have the following.

\begin{thm}
Any two-dimensional globally hyperbolic spacetimes can be causally
isomorphically, or conformally equivalently imbedded into the
two-dimensional Einstein's static universe.
\end{thm}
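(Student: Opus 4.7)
The plan is to reduce the statement to the two previous theorems by a dichotomy on the topological type of the Cauchy surface. Recall that a globally hyperbolic spacetime $M$ admits, by definition, a Cauchy surface $\Sigma$, and in the two-dimensional case $\Sigma$ is a connected $1$-manifold (we may assume $M$ is connected, handling each connected component separately otherwise). The classification of connected $1$-manifolds without boundary then forces $\Sigma$ to be diffeomorphic either to $S^1$, in which case $\Sigma$ is compact, or to $\mathbb{R}$, in which case $\Sigma$ is non-compact.

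Given this dichotomy, the argument splits into two essentially trivial cases. If $\Sigma$ is compact, then Theorem \ref{cpt-imbedding} (for the causal imbedding statement) and Theorem \ref{cpt-conf} (for the conformal imbedding statement) apply directly to produce an imbedding of $M$ into the two-dimensional Einstein's static universe $E$. If $\Sigma$ is non-compact, then we appeal instead to the theorem immediately preceding the one to be proved, which handles the non-compact-Cauchy-surface case by first imbedding into $\mathbb{R}^2_1$ (via Theorem \ref{imbedding} or Theorem \ref{conformal}) and then composing with the standard causal/conformal imbedding of $\mathbb{R}^2_1$ into $E$ sending $\mathbb{R}_0$ to $S^1 \setminus \{\text{pt}\}$.

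There is essentially no obstacle here: the real content has already been established in the prior theorems, and what remains is only the observation that every globally hyperbolic two-dimensional spacetime falls into exactly one of the two cases that have been treated. The only minor care needed is to verify that the resulting map is an imbedding in each case (injective, a homeomorphism onto its image, causal-structure preserving in one formulation and a conformal diffeomorphism onto its image in the other), but both assertions are preserved under the compositions used, since the auxiliary maps ($\pi_E \circ \overline{f}$ in the compact case and the standard imbedding of $\mathbb{R}^2_1$ into $E$ in the non-compact case) are themselves causal isomorphisms onto their images and local isometries, respectively.
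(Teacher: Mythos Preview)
Your proposal is correct and matches the paper's own argument, which simply combines the immediately preceding theorem (non-compact Cauchy surface case) with Theorem~\ref{cpt-imbedding} and Theorem~\ref{cpt-conf} (compact Cauchy surface case). The only addition you make is to spell out the classification of connected $1$-manifolds to justify the dichotomy, which the paper leaves implicit.
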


In this theorem, it must be noted that if spacetimes have
non-compact Cauchy surfaces,then the Cauchy surface is sent to
$S^1-\{\mbox{a point}\}$ in $E$ and, if they have compact Cauchy
surfaces, then the Cauchy surface is sent to $S^1$.

\section{Acknowledgement}

The present research was conducted by the research fund of Dankook
university in 2013.

\end{document}